\theoremstyle{plain}
\newtheorem*{thm*}{\protect\theoremname}
\providecommand{\theoremname}{Theorem}
\providecommand{\theoremname}{Theorem}
\begin{document}
\title{Systems of random variables and the Free Will Theorem}
\author{Ehtibar N.\ Dzhafarov}
\email{To whom correspondence should be addressed. E-mail: ehtibar@purdue.edu}

\affiliation{Purdue University, USA}
\author{Janne V.\ Kujala}
\email{E-mail: jvk@iki.fi}

\affiliation{University of Turku, Finland}
\begin{abstract}
The title refers to the Free Will Theorem by Conway and Kochen whose
flashy formulation is: if experimenters possess free will, then so
do particles. In more modest terms, the theorem says that individual
pairs of spacelike separated particles cannot be described by deterministic
systems provided their mixture is the same for all choices of measurement
settings. We reformulate and generalize the Free Will Theorem in terms
of systems of random variables, and show that the proof is based on
two observations: (1) some compound systems are contextual (non-local),
and (2) any deterministic system with spacelike separated components
is non-signaling. The contradiction between the two is obtained by
showing that a mixture of non-signaling deterministic systems, if
they exist, is always noncontextual. The ``experimenters' free will''
(independence) assumption is not needed for the proof: it is made
redundant by the assumption (1) above, critical for the proof. We
next argue that the reason why an individual pair of particles is
not described by a deterministic system is more elementary than in
the Free Will Theorem. A system, contextual or not and deterministic
or not, includes several choices of settings, each of which can be
factually used without changing the system. An individual pair of
particles can only afford a single realization of random variables
for a single choice of settings. With this conceptualization, the
``free will of experimenters'' cannot be even meaningfully formulated,
and the choice between the determinism and ``free will of particles''
becomes arbitrary and inconsequential. 
\end{abstract}
\maketitle

\section{Informal introduction}

In this paper the issues related to the Free Will Theorem (FWT) \citep{ConwayKochen2006,ConwayKochen2009}
are discussed in terms of random variables. Conway and Kochen in \citep{ConwayKochen2009,ConwayKochen2007}
emphasize that their theorem does not use probabilistic notions. This
seems to plunge our paper in a controversy from the outset \citep{Tumulka2007,Goldsteinetal.2010,ConwayKochen2007}:
our analysis of the FWT would be suspect if we used conceptual means
that are not acceptable in the original formulation of the theorem.
This is not the case however. We use the language of random variables
to describe quantum experiments, involving large numbers of replications
with multitudes of particles. On the level of individual particles
(more specifically, individual pairs of entangled particles), our
focus, the same as in the original FWT, is exclusively on whether
they can be described as systems of deterministic outcomes (which
are, of course, a special case of random variables). Probabilistic
description of quantum experiments is hardly controversial, and we
have a good demonstration of the benefits it offers for the FWT in
\citep{CatorLandsman2014}. Moreover, it is unavoidable. It is known
\citep{CatorLandsman2014,Goldsteinetal.2010} that the FWT can be
found or extracted from much earlier work than \citep{ConwayKochen2006,ConwayKochen2009},
with the Kochen-Specker system used by Conway and Kochen being replaced
with other contextual systems (in fact, any contextual system, as
we will see later). The contextuality of many of these systems is
more saliently probabilisitic than that of the Kochen-Specker system.
This prominently includes the EPR/Bohm system, with a variant of the
FWT being already seen in Bell's pioneering work \citep{Bell1964}.

Our reformulations lead us to critically re-examine and modify the
FWT, although not invalidate it. In particular, we show that the assumption
of the free will of experimenters (or independence assumption, as
many authors prefer to call it \citep{CatorLandsman2014,Esfeld2015,Friedmanetal2019,BarrettGisin2011,Rossetetal.2014,Hall2011})
is not needed in the FWT. This assumption is only needed to ensure
that experimental observations correctly identify the system experimented
on as contextual. It is therefore made unnecessary by another assumption,
critical for the FWT and underivable from the independence assumption
--- that a contextual system with certain properties exists. Furthermore,
we argue that while the question of whether individual particles can
be described by deterministic systems is indeed to be answered negatively,
and while the FWT is indeed one way of demonstrating this, there is
a more elementary reason for this negative answer: the notion of a
system, deterministic or not, is not applicable to an individual pair
of particles to begin with. The latter is a realization of random
variables for a single choice of settings, whereas the notion of a
system involves several mutually exclusive settings, each of which
can be factually and repeatedly used.

Our analysis is based on the Contextuality-by-Default (CbD) theory
(e.g., \citep{DCK2017,DK2017,KDL2015}), but its utilization in this
paper is confined to only two basic principles. The first one is that
each random variable is identified not only by the property it measures
but also by the context (settings) in which it measures this property.
The second principle is that no two random variables recorded in different,
mutually exclusive contexts possess a joint distribution. Moreover,
these principles are only applied to a special subset of systems of
random variables, the compound, ``Alice-Bob''-type systems with
spacelike separation. These systems are non-signaling, and this makes
it unnecessary for us to invoke most of the content of CbD. Because
of this the reader need not be familiar with CbD to understand this
paper.

However, a brief comment may be needed on the two principles just
mentioned. The double-indexation of random variables means that if
Alice chooses a setting $x$ and Bob chooses a setting $y$, their
measurement outcomes (random variables) are represented as, respectively,
$A^{x,y}$ and $B^{x,y}$. And if Bob changes his setting to $y'$
while Alice maintains her setting, her measurement is represented
by another random variable, $A^{x,y'}.$ A reader might erroneously
interpret this as indicating that Bob somehow influences Alice's measurements
despite their spacelike separation (a ``spooky action at a distance'').
This is not the case. The distribution of $A^{x,y'}$ is the same
as that of $A^{x,y}$, so Bob transfers no information to Alice. There
is no ``action.'' The difference between $A^{x,y}$ and $A^{x,y'}$
simply reflects the relational nature of random variables in classical
probability theory. A random variable is a measurable function on
a probability space, and any variable defined on the same space is
jointly distributed with it: their observed realizations are paired.
Therefore, if $A^{x,y}$ and $A^{x,y'}$ were the same random variable,
they would be jointly distributed. But this would mean that realizations
of $A^{x,y}$ and realizations of $A^{x,y'}$ co-occur (and are equal),
while in reality they occur in mutually exclusive contexts.\footnote{Without elaborating (see \citep{DK2017,Dzh2019} for detailed argument),
another way of understanding the contextual labeling is to observe
that if one dropped the second superscript in $A^{x,y}$ and the first
superscript in $B^{x,y}$, the system would have to be noncontextual
(due to the fact that joint distributions of $X,Y$ and of $Y,Z$
imply the joint distribution of $X,Y,Z$). The existence of contextual
systems therefore is a \emph{reductio ad absurdum} proof that contextual
labeling is necessary.}

The scheme of the paper is as follows. In the next section we introduce
formal notions and definitions related to systems of random variables.
In Sections \ref{sec:Free-Will-Theorem} and \ref{sec:Where-is-the}
we present the FWT in the language of such systems. In Section \ref{sec:Systems-versus-isolated}
we show that a systematic use of the language of random variables
makes the FWT unnecessary (though not wrong): the experimenter' free
will (independence) becomes unformulable, and the choice between the
determinism and free will of particles becomes arbitrary and inconsequential.
The concluding section provides a brief summary.

\section{\label{sec:Preliminaries}Preliminaries}

A \emph{compound system} of random variables is an indexed set of
random variables 
\begin{equation}
\mathcal{R}=\left\{ \left(A^{x,y},B^{x,y}\right):\left(x,y\right)\in C\right\} ,\label{eq:system gen}
\end{equation}
where $x$ is a \emph{property} measured by Alice, $y$ is a \emph{property}
measured by Bob, $\left(x,y\right)$ is the \emph{context} in which
the measurements are made, and $C$ is a set of all possible contexts.
Every random variable therefore is identified by the property it measures
and the context in which it measures it. To simplify discussion, we
will assume that all random variables have a finite number of values.
The properties $x,y$ are also referred to as \emph{settings}, although
there is the obvious semantic difference: a setting $x$ designates
the decision and arrangements made to measure property $x$.

Alice and Bob are always assumed to be \emph{spacelike separated}.
Because of this, by special relativity, the system is \emph{non-signaling}:
the distributions of the variables are context-independent, 
\begin{equation}
A^{x,y}\overset{dist}{=}A^{x,y'},\label{eq:nonsignaling1}
\end{equation}
for any $x,y,y'$ such that $\left(x,y\right),\left(x,y'\right)\in C$.
The symbol $\overset{dist}{=}$ indicates equality of distributions.
Analogously, 
\begin{equation}
B^{x,y}\overset{dist}{=}B^{x',y},
\end{equation}
for any $y,x,x'$ such that $\left(x,y\right),\left(x',y\right)\in C$.

One prominent example of a compound system is the EPR/Bohm system
\citep{Bell1964,CHSH1969}, 
\begin{equation}
\begin{array}{c}
\boxed{\mathcal{R}_{EPRB}}\\
\begin{array}{|c|c||c|c|}
\hline A^{1,1} &  & B^{1,1} & \\
\hline A^{1,2} &  &  & B^{1,2}\\
\hline  & A^{2,1} & B^{2,1} & \\
\hline  & A^{2,2} &  & B^{2,2}
\\\hline \end{array}
\end{array}\;,
\end{equation}
with $C=\left\{ x=1,x=2\right\} \times\left\{ y=1,y=2\right\} $.
Another prominent example is the compound version of the Kochen-Specker-Peres
system \citep{Peres1995,KS1967}, 
\begin{equation}
\begin{array}{c}
\boxed{\mathcal{R}_{KSP}}\\
\begin{array}{|c|c|c|c||c|c|c|c|}
\hline A^{1,1} &  & \cdots &  & B^{1,1} &  & \cdots & \\
\hline A^{1,2} &  & \cdots &  &  & B^{1,2} & \cdots & \\
\hline \vdots &  & \vdots\vdots\vdots &  &  & \vdots & \vdots\vdots\vdots & \\
\hline A^{1,33} &  &  &  &  &  & \cdots & B^{1,33}\\
\hline \vdots & \vdots & \vdots\vdots\vdots & \vdots & \vdots & \vdots & \vdots\vdots\vdots & \vdots\\
\hline  &  & \cdots & A^{40,1} & B^{40,1} &  & \cdots & \\
\hline  &  & \cdots & A^{40,2} &  & B^{40,2} & \cdots & \\
\hline  &  & \vdots\vdots\vdots & \vdots &  & \vdots & \vdots\vdots\vdots & \\
\hline  &  &  & A^{40,33} &  &  & \cdots & B^{40,33}
\\\hline \end{array}
\end{array}\;,
\end{equation}
with $C=\left\{ x=1,\ldots,x=40\right\} \times\left\{ y=1,\ldots,y=33\right\} $.
In $\mathcal{R}_{EPRB}$, the $x$-values and $y$-values enumerate
choices of axes by Alice and Bob, and the random variables are $0/1$
(say, spin values in spin-$\nicefrac{1}{2}$ particles). In $\mathcal{R}_{KSP}$,
the $y$-values represent 33 special axes in Peres's proof of the
Kochen-Specker theorem \citep{Peres1995}, and the $x$-values encode
40 Peres's triples formed using these 33 axes; the $A$-variables
have values $011,101,110$, and the $B$-variables are $0/1$.

Any two random variables recorded in the same context, and referred
to as an \emph{AB-pair}, are \emph{jointly distributed}: this means
that $\Pr\left[A^{x,y}=a,B^{x,y}=b\right]$ is well defined, for $\left(x,y\right)\in C$.
However, two random variables from different contexts are \emph{stochastically
unrelated}, i.e. have no joint distribution: i.e., if $\left(x,y\right)\not=\left(x',y'\right)$,
the event conjunctions $\left[A^{x,y}=a,A^{x',y'}=a'\right]$, $\left[A^{x,y}=a,B^{x',y'}=b\right]$,
etc. are not well-defined events, and no probabilities can be assigned
to them. This formal distinction reflects the obvious fact that random
variables from mutually exclusive contexts can never be observed together,
in any empirical meaning of ``together.'' In particular, $A^{x,y}$
and $A^{x,y'}$ in (\ref{eq:nonsignaling1}) are not equal, because
they are not jointly distributed. All this means that the system $\mathcal{R}$
in (\ref{eq:system gen}) is a collection of stochastically unrelated
AB-pairs, combined within a single system only because every AB-pair
shares at least one property it measures with at least one other AB-pair.

Random variables attaining a given value with probability 1 are \emph{deterministic
variables}. A \emph{deterministic system} is a system containing only
deterministic variables. Thus, the two systems below are deterministic
versions of the EPR/Bohm system, non-signaling ($\mathcal{D}_{EPRB}$)
and signaling ($\mathcal{D}'_{EPRB}$): 
\begin{equation}
\begin{array}{c}
\boxed{\mathcal{D}_{EPRB}}\\
\begin{array}{|c|c||c|c|}
\hline 1 &  & 0 & \\
\hline 1 &  &  & 1\\
\hline  & 0 & 0 & \\
\hline  & 0 &  & 1\\
\hline\hline _{x=1} & _{x=2} & _{y=1} & _{y=2}
\end{array}
\end{array}\;,\begin{array}{c}
\boxed{\mathcal{D}'_{EPRB}}\\
\begin{array}{|c|c||c|c|}
\hline 1 &  & 0 & \\
\hline 0 &  &  & 1\\
\hline  & 0 & 1 & \\
\hline  & 0 &  & 1\\
\hline\hline _{x=1} & _{x=2} & _{y=1} & _{y=2}
\end{array}
\end{array}\;.\label{eq:EPRB D D'}
\end{equation}
In presenting these deterministic systems we conveniently identify
the random variables with their supports, say, writing $0$ instead
of $A^{2,1}\equiv0$ ($\equiv$ meaning ``equal with probability
1''). Because one loses the indexation as a result, one has to indicate
for each number what properties it measures (at the bottom of the
tables).

Note that we use capital Roman letters to designate random variables,
and the script letters $\mathcal{R},\mathcal{D}$ to refer to systems
--- because a system is not a random variable, it is a set of stochastically
unrelated random variables (the AB-pairs).

A \emph{coupling} $\bar{R}$ for a system $\mathcal{R}$ is an identically
double-labeled set of \emph{jointly distributed} random variables
\begin{equation}
\bar{R}\overset{}{=}\left\{ \bar{A}^{x,y},\bar{B}^{x,y}:\left(x,y\right)\in C\right\} ,\label{eq:coupling gen}
\end{equation}
such that every AB-pair of $\bar{R}$ is distributed as the corresponding
AB-pair of $\mathcal{R}$: 
\begin{equation}
\left(\bar{A}^{x,y},\bar{B}^{x,y}\right)\overset{dist}{=}\left(A^{x,y},B^{x,y}\right),
\end{equation}
for every $\left(x,y\right)\in C$. Note that we can write $\bar{R}$
rather than $\mathcal{\bar{R}}$ in (\ref{eq:coupling gen}) because
a coupling is a random variable in its own right. Thus, while a coupling
of $\mathcal{R}_{EPRB}$ can be presented as 
\begin{equation}
\begin{array}{c}
\boxed{\bar{R}_{EPRB}}\\
\begin{array}{|c|c||c|c|}
\hline \bar{A}^{1,1} &  & \bar{B}^{1,1} & \\
\hline \bar{A}^{1,2} &  &  & \bar{B}^{1,2}\\
\hline  & \bar{A}^{2,1} & \bar{B}^{2,1} & \\
\hline  & \bar{A}^{2,2} &  & \bar{B}^{2,2}
\\\hline \end{array}
\end{array}\;,
\end{equation}
it is no longer a set of four stochastically unrelated pairs, but
a random variable 
\begin{equation}
\bar{R}=\left\{ \bar{A}^{1,1},\bar{A}^{1,2},\bar{A}^{2,1},\bar{A}^{2,2},\bar{B}^{1,1},\bar{B}^{2,1},\bar{B}^{1,2},\bar{B}^{2,2}\right\} 
\end{equation}
with $2^{8}$ possible values.\footnote{To prevent misunderstanding, the term ``random variable'' is understood
here in the standard meaning of modern probability theory, with no
restrictions on the codomain set of values: random vectors and random
sets therefore are also random variables.} Note also, that any deterministic system $\mathcal{D}$ has a unique
coupling $D$, and the two are easy to confuse if one uses our convenient
identification of deterministic random variables with their supports.
Thus, the coupling of $\mathcal{D}{}_{EPRB}$ in (\ref{eq:EPRB D D'})
is written precisely as $\mathcal{D}{}_{EPRB}$ itself: 
\begin{equation}
\begin{array}{c}
\boxed{\bar{\mathcal{D}}_{EPRB}}\\
\begin{array}{|c|c||c|c|}
\hline 1 &  & 0 & \\
\hline 1 &  &  & 1\\
\hline  & 0 & 0 & \\
\hline  & 0 &  & 1\\
\hline\hline _{x=1} & _{x=2} & _{y=1} & _{y=2}
\end{array}
\end{array}
\end{equation}

A (non-signaling) compound system $\mathcal{R}$ is \emph{noncontextual}
if it has a coupling $\bar{R}$ such that 
\begin{equation}
\begin{array}{c}
\Pr\left[\bar{A}^{x,y}=\bar{A}^{x,y'}\right]=1,\\
\textnormal{and}\\
\Pr\left[\bar{B}^{x,y}=\bar{B}^{x',y}\right]=1,
\end{array}\label{eq:identity coupling}
\end{equation}
whenever the indicated contexts are defined (belong to $C$). If such
a coupling does not exist, the system is \emph{contextual}.

Overlooking logical subtleties \citep{Dzh2019}, this definition is
equivalent to the traditional definitions of contextuality and locality,
in terms of the non-existence of joint distributions for single-indexed
random variables \citep{Fine1982} and in terms of hidden variables
with noncontextual/local mapping into observables \citep{Bell1964,KS1967}.
Perhaps this becomes more clear on observing that noncontextuality
is equivalent to the existence of a set of jointly distributed \emph{single-indexed}
random variables 
\begin{equation}
\tilde{R}\overset{}{=}\left\{ \tilde{A}^{x},\tilde{B}^{y}:\left(x,y\right)\in C\right\} ,
\end{equation}
such that 
\begin{equation}
\left(\tilde{A}^{x},\tilde{B}^{y}\right)\overset{dist}{=}\left(A^{x,y},B^{x,y}\right),
\end{equation}
for every $\left(x,y\right)\in C$.

\section{\label{sec:Free-Will-Theorem}Free Will Theorem}

If all AB-pairs in a compound system $\mathcal{R}$ are set to specific
values, the resulting system is called a \emph{realization} of $\mathcal{R}$.
The reason this has to be presented as a definition is that, unlike
a realization of a random variable (e.g., an AB-pair), a realization
of a system is not an observable outcome of any experiment: it is
a pure mathematical abstraction, as the variables do not co-occur
across contexts. Recall that each random variable in $\mathcal{R}$
is finite-valued, because of which the set of possible realizations
of $\mathcal{R}$ is finite. Thus, the system $\mathcal{R}_{EPRB}$
has $4^{4}$ realizations, whereas $\mathcal{R}_{KSP}$ has $6^{40\cdot33}$
realizations.

The question posed in the FWT theorem can be formulated thus: given
that an idealized experiment involving an unlimited number of particle
pairs is described by a system $\mathcal{R}$, is it possible that
each individual pair of particles is a deterministic system that coincides
with one of the realizations of $\mathcal{R}$? The crux of the issue
here is in whether the realizations of the systems can describe individual
particle pairs. If not for this constraint, it would be innocuous
(although still objectionable, as we will see later) to say that the
system $\mathcal{R}$ is presentable as a mixture of some of its realizations
$\mathcal{D}_{1},\ldots,\mathcal{D}_{k}$ taken as deterministic systems:
\begin{equation}
\mathcal{R}\overset{dist}{=}\left\{ \begin{array}{ccc}
\mathcal{D}_{1} & \textnormal{with probability } & p_{1}\\
\vdots & \vdots & \vdots\\
\mathcal{D}_{k} & \textnormal{with probability } & p_{k}
\end{array}\right.,\sum_{i=1}^{k}p_{i}=1.\label{eq:decomposition}
\end{equation}
However, in the question asked by the FWT, these deterministic systems
are assumed to describe real physical entities (particle pairs), because
of which they should be physically realizable. In particular, they
are subject to special relativity, and have to be non-signaling. For
instance, in the decomposition of $\mathcal{R}_{EPRB}$, the signaling
deterministic system $\mathcal{D}'_{EPRB}$ in (\ref{eq:EPRB D D'})
is not allowed. 
\begin{thm*}[reformulated and generalized FWT]
A contextual system $\mathcal{R}$ cannot be decomposed as in (\ref{eq:decomposition}),
where $\mathcal{D}_{1},\ldots,\mathcal{D}_{k}$ are non-signaling
deterministic systems each of which coincides with a realization of
$\mathcal{R}$. 
\end{thm*}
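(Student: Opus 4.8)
The plan is to prove the contrapositive: if a decomposition as in (\ref{eq:decomposition}) into non-signaling deterministic systems exists, then $\mathcal{R}$ is noncontextual. Since contextuality is defined in (\ref{eq:identity coupling}) as the nonexistence of an identity coupling, the stated theorem follows at once from this implication. The realization assumption is used only to guarantee that each $\mathcal{D}_i$ carries the same label set $C$ and value ranges as $\mathcal{R}$, so that the mixture and the couplings below are well-typed; the real work is done by the combination of determinism and non-signaling.

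First I would exploit that combination inside each component $\mathcal{D}_i$. Because every variable in $\mathcal{D}_i$ is constant with probability $1$, the non-signaling conditions (\ref{eq:nonsignaling1}) and their $B$-counterpart, which assert only equality of distributions, become literal equalities of values: the $A$-value cannot depend on $y$, nor the $B$-value on $x$. Hence each $\mathcal{D}_i$ is encoded by single-indexed values $\alpha_i^x$ and $\beta_i^y$, with $A^{x,y}\equiv\alpha_i^x$ and $B^{x,y}\equiv\beta_i^y$ throughout the system. This is the crucial step, and precisely where the experimenters' ``free will'' becomes unnecessary: the collapse to a single index is forced by non-signaling alone, not by any independence of settings.

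Next I would assemble the coupling. Each deterministic $\mathcal{D}_i$ has a unique coupling $D_i$, which by the previous step sets $\bar{A}^{x,y}\equiv\alpha_i^x$ and $\bar{B}^{x,y}\equiv\beta_i^y$; it therefore satisfies the identity constraints (\ref{eq:identity coupling}) trivially, since $\bar{A}^{x,y}$ and $\bar{A}^{x,y'}$ are the same number $\alpha_i^x$, and likewise for $B$. I would then let $\bar{R}$ be the random variable equal to $D_i$ with probability $p_i$. This $\bar{R}$ is a genuine coupling of $\mathcal{R}$: it is a single jointly distributed set of variables, and the distribution of each AB-pair $(\bar{A}^{x,y},\bar{B}^{x,y})$ is the $p_i$-mixture of the points $(\alpha_i^x,\beta_i^y)$, which by (\ref{eq:decomposition}) equals the distribution of $(A^{x,y},B^{x,y})$ in $\mathcal{R}$.

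Finally I would verify that $\bar{R}$ inherits the identity property. Conditioned on the mixture landing on $D_i$, the events $\bar{A}^{x,y}=\bar{A}^{x,y'}$ and $\bar{B}^{x,y}=\bar{B}^{x',y}$ each hold with probability $1$; averaging over $i$ with weights $p_i$ gives probability $1$ unconditionally. Thus $\bar{R}$ is an identity coupling and $\mathcal{R}$ is noncontextual, which closes the contrapositive. I expect no serious obstacle in the calculation itself; the only point needing care is the bookkeeping showing that a mixture of couplings is again a coupling with the correct AB-pair marginals, while the entire conceptual weight of the argument sits in the first step, the single-index collapse forced by non-signaling.
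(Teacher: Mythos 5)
Your proposal is correct and follows essentially the same route as the paper: both arguments mix the unique couplings $D_i$ of the deterministic components (the paper does this explicitly via a random variable $\Lambda$ with $\Pr[\Lambda=i]=p_i$), both rest on the key observation that determinism turns the distributional non-signaling conditions into literal equalities of values (your single-indexed $\alpha_i^x,\beta_i^y$ are exactly the paper's $a_i^{x,y}=a_i^{x,y'}$, and also its hidden-variable reformulation $f(\Lambda,x),g(\Lambda,y)$), and both conclude that the mixture is an identity coupling, so $\mathcal{R}$ is noncontextual. Your contrapositive phrasing even absorbs the paper's separately stated $k=0$ case automatically, but this is a cosmetic rather than substantive difference.
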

\begin{proof}
If $k=0$ (no non-signaling realizations of $\mathcal{R}$ exist),
the theorem is proved. Assume that (\ref{eq:decomposition}) holds
with $k>0$. Introduce a random variable $\Lambda$ such that $\Pr\left[\Lambda=i\right]=p_{i}$
($i=1,\ldots,k$). Each $\mathcal{D}_{i}$, being deterministic, has
a unique coupling $D_{i}$, and the mixture 
\begin{equation}
\bar{R}=\left\{ \begin{array}{ccc}
D_{1} & \textnormal{if } & \Lambda=1\\
\vdots & \vdots & \vdots\\
D_{k} & \textnormal{if } & \Lambda=k
\end{array}\right.,\sum_{i=1}^{k}p_{i}=1\label{eq:decomposition coupling}
\end{equation}
is a coupling of $\mathcal{R}$. In this coupling, 
\begin{equation}
\bar{A}^{x,y}=\left\{ \begin{array}{ccc}
a_{1}^{x,y} & \textnormal{if } & \Lambda=1\\
\vdots & \vdots & \vdots\\
a_{k}^{x,y} & \textnormal{if } & \Lambda=k
\end{array}\right.,
\end{equation}
where $a_{i}^{x,y}$ is the realization of $A^{x,y}$ in system $\mathcal{D}_{i}$.
But $a_{i}^{x,y}=a_{i}^{x,y'}$ for any $\left(x,y\right),\left(x,y'\right)\in C$
(non-signaling). We have then 
\begin{equation}
\bar{A}^{x,y}=\left\{ \begin{array}{ccc}
a_{1}^{x,y'} & \textnormal{if } & \Lambda=1\\
\vdots & \vdots & \vdots\\
a_{k}^{x,y'} & \textnormal{if } & \Lambda=k
\end{array}\right.=\bar{A}^{x,y'}.
\end{equation}
Analogously, 
\begin{equation}
\bar{B}^{x,y}=\left\{ \begin{array}{ccc}
b_{1}^{x,y}=b_{1}^{x',y} & \textnormal{if } & \Lambda=1\\
\vdots & \vdots & \vdots\\
b_{k}^{x,y}=b_{k}^{x',y} & \textnormal{if } & \Lambda=k
\end{array}\right.=\bar{B}^{x',y},
\end{equation}
for all $\left(x,y\right),\left(x',y\right)\in C$. By definition
then, $\mathcal{R}$ is noncontextual, contrary to the theorem's premise. 
\end{proof}
Equivalently, and perhaps more familiar to physicists, the proof could
be formulated as a demonstration that $\mathcal{R}$ has a local hidden
variable model. Using the same $\Lambda$ as in the proof, we have
a coupling $\bar{R}$ of $\mathcal{R}$ such that 
\begin{equation}
\left(\bar{A}^{x,y},\bar{B}^{x,y}\right)=\left(f\left(\Lambda,x,y\right),g\left(\Lambda,x,y\right)\right),
\end{equation}
where for each value $\Lambda=i$ and each $\left(x,y\right)$, the
function $\left(f,g\right)$ reads the value of $\left(a_{i}^{x,y},b_{i}^{x,y}\right)$
in $D_{i}$. Since each $D_{i}$ is non-signaling, i.e., $f\left(i,x,y\right)=f\left(i,x\right)$
and $g\left(i,x,y\right)=g\left(i,y\right)$ for each $\Lambda=i$,
we have 
\begin{equation}
\left(\bar{A}^{x,y},\bar{B}^{x,y}\right)=\left(f\left(\Lambda,x\right),g\left(\Lambda,y\right)\right),
\end{equation}
which is a local (i.e., noncontextual) model with $\Lambda$ as a
hidden variable.

Applying this theorem to $\mathcal{R}_{KSP}$ used in Conway and Kochen's
proof, this system has no non-signaling realizations (by the Kochen-Specker
theorem). Rather surprisingly therefore, the proof of the Conway-Kochen
version of the FWT is contained in the first sentence of the proof
above. For $\mathcal{R}_{EPRB}$, we have 16 non-signaling realizations
of this system, and the proof says that their mixtures can only be
noncontextual.

\section{\label{sec:Where-is-the}Where is the free will assumption in the
proof?}

The formulations and proofs given in the previous section do not even
mention the hypothetical freedom with which Alice and Bob choose their
settings. How is it possible? The answer is that the experimenters'
free will assumption is indeed redundant. The proof above is contingent
on the assumption that $\mathcal{R}$ is a correct description of
an idealized experiment involving an unlimited number of particle
pairs, those whereof we ask whether they could be deterministic systems.
The experimenters' free will is only needed to dismiss a conspiracy
of nature leading to an incorrect identification of the system in
such an experiment. Let us explain this using the system $\mathcal{R}_{EPRB}$.

Consider the possibility that in an EPR/Bohm experiment only four
types of entangled particle pairs are possible, described by the deterministic
systems 
\begin{equation}
\begin{array}{cc}
\begin{array}{c}
\boxed{\mathcal{D}_{1}}\\
\begin{array}{|c|c||c|c|}
\hline 1 &  & 1 & \\
\hline 1 &  &  & 1\\
\hline  & 1 & 1 & \\
\hline  & 1 &  & 1\\
\hline\hline _{x=1} & _{x=2} & _{y=1} & _{y=2}
\end{array}
\end{array} & \begin{array}{c}
\boxed{\mathcal{D}_{2}}\\
\begin{array}{|c|c||c|c|}
\hline 0 &  & 0 & \\
\hline 0 &  &  & 0\\
\hline  & 0 & 0 & \\
\hline  & 0 &  & 0\\
\hline\hline _{x=1} & _{x=2} & _{y=1} & _{y=2}
\end{array}
\end{array}\\
\\
\begin{array}{c}
\boxed{\mathcal{D}_{3}}\\
\begin{array}{|c|c||c|c|}
\hline 0 &  & 0 & \\
\hline 0 &  &  & 1\\
\hline  & 1 & 0 & \\
\hline  & 1 &  & 1\\
\hline\hline _{x=1} & _{x=2} & _{y=1} & _{y=2}
\end{array}
\end{array} & \begin{array}{c}
\boxed{\mathcal{D}_{4}}\\
\begin{array}{|c|c||c|c|}
\hline 1 &  & 0 & \\
\hline 1 &  &  & 0\\
\hline  & 0 & 0 & \\
\hline  & 0 &  & 0\\
\hline\hline _{x=1} & _{x=2} & _{y=1} & _{y=2}
\end{array}
\end{array}
\end{array}
\end{equation}
Then the true system $\mathcal{R}_{EPRB}$ obtained from any mixture
of these four systems is noncontextual. Suppose, however, that whenever
Alice and Bob choose $\left(x,y\right)=\left(1,1\right)$ or $\left(2,1\right)$
(the first and third rows in the matrices), the nature chooses to
supply a pair of particles described either by $\mathcal{D}_{1}$
or by $\mathcal{D}_{2}$, equiprobably; whereas for $\left(x,y\right)=\left(1,2\right)$
and $\left(2,2\right)$ (the second and fourth rows in the matrices)
the nature chooses between $\mathcal{D}_{3}$ and $\mathcal{D}_{4}$
equiprobably. Neither Alice nor Bob nor anyone analyzing their experimental
data has any way of knowing this. Following many replications, the
results will be a statistical estimate of a system in which all random
variables are uniformly distributed, 
\begin{equation}
\left\langle A^{x,y}\right\rangle =\left\langle B^{x,y}\right\rangle =\frac{1}{2},
\end{equation}
and 
\begin{equation}
\begin{array}{c}
\left\langle A^{1,1}B^{1,1}\right\rangle =\left\langle A^{2,1}B^{2,1}\right\rangle =\left\langle A^{2,2}B^{2,2}\right\rangle =\frac{1}{2},\\
\\
\left\langle A^{1,2}B^{1,2}\right\rangle =0.
\end{array}
\end{equation}
This is a non-signaling contextual system (a PR box, \citep{PR1994}),
and if used in the proof of the FWT in place of what we assumed to
be the true system, it will lead one to the wrong conclusion that
no decomposition (\ref{eq:decomposition}) is possible.

To avoid such conspiratorial scenarios one can postulate that the
distribution of the non-signaling deterministic systems in (\ref{eq:decomposition})
is the same for all contexts (all choices of settings). This can be
interpreted in terms of Alice's and Bob's free will, but does not
have to. It would be better therefore to call this assumption \emph{unbiasedness},
but not to multiply terminology we follow the authors who call it
(measurement or setting)\emph{ independence}. The assumption, of course,
is consistent with the choices of settings being perfectly predetermined,
but simply uncorrelated with the occurrences of the different types
of deterministic systems. Moreover, Alice's and Bob's choices may
very well be correlated, it makes no difference.

The example just given, of a noncontextual system being mistaken for
a contextual one, suggests the logical possibility of taking the decomposition
(\ref{eq:decomposition}) for granted, and accounting for the (apparent)
contextuality of the observed system $\mathcal{R}$ either by relaxing
the requirement of non-signaling of the deterministic systems $\mathcal{D}_{i}$
or by exploring deviations from the independence assumption. There
is an obvious reciprocity between the two, and it has indeed been
researched and quantified \citep{Friedmanetal2019,BarrettGisin2011,Rossetetal.2014,Hall2011}.
This line of study is outside the scope of our paper. In the FWT we
are only interested in whether individual particle pairs can be described
by deterministic non-signaling systems, and the answer given is negative.

As the above reasoning shows, the independence assumption is not a
necessary premise of this theorem, because it is obviated by the critical
assumption that certain contextual systems exist. The situation is
this: 
\begin{description}
\item [{(i)}] if we do not assume, e.g., that $\mathcal{R}_{EPRB}$ for
certain quadruples of axes is contextual, then the FWT for this system
cannot be proved whether or not one adopts the independence assumption;
and 
\item [{(ii)}] if we do assume the contextuality of $\mathcal{R}_{EPRB}$
(presumably because we believe experiments or quantum-mechanical theory),
the proof can be carried out without mentioning the independence assumption. 
\end{description}
Analogously, Conway and Kochen have to postulate the existence of
a system $\mathcal{R}_{KSP}$ with certain properties (the SPIN assumption),
ensuring that no consistent assignment of values to Alice's measurements
is possible (the Kochen-Specker theorem); and they also have to postulate
that if Bob's axis coincides with one of the three axes chosen by
Alice, then the corresponding measurement outcomes always coincide
(the TWIN assumption). With these postulates, however, the independence
(part of their MIN assumption) is not needed.\footnote{As noted by critics of the FWT \citep{Tumulka2007,Goldsteinetal.2010},
MIN is not a rigorous statement. It speaks of Alice and Bob's choices
as being made ``freely'' and ``independently,'' of which the former
we translate into the independence assumption. The ``independently''
of Conway and Kochen apparently means that Alice's and Bob's choices
are combined in all possible ways. This assumption is not needed either.
Thus, it is easy to see that in the system $\mathcal{R}_{KSP}$ one
can delete all rows in which Bob's axis is not one of the three axes
chosen by Alice. The assumption in question then will be violated,
but the proof of the (original) FWT will not change at all. (MIN also
includes a statement that can be interpreted as non-signaling principle.)}

Summarizing, the independence (or ``experimenters' free will'')
is only needed if we consider the epistemological question: how can
one be sure that the compound system estimated from an experiment
is truly contextual? The FWT is a conditional statement: if we have
a contextual compound system with certain properties, it cannot be
decomposed as in (\ref{eq:decomposition}).

\section{\label{sec:Systems-versus-isolated}Systems versus isolated AB-pairs}

There is, however, a simpler reason not to use deterministic systems
when describing individual particle pairs. The reason is that a single
pair of particles is a \emph{realization of an isolated AB-pair} of
a system rather than a realization of an entire system. The difference
is that the AB-pair is determined by the factual context chosen by
Alice and Bob, while any given realization of a system also includes
the counterfactual contexts that Alice and Bob ``could have chosen.''
Thus, experimental trials for $\mathcal{R}_{EPRB}$ produce a series
of outcomes, such as 
\begin{equation}
\begin{array}{c|c|c|c}
\vdots & \vdots & \vdots & \vdots\\
0 &  &  & 1\\
 & 0 & 1\\
 & 1 & 1\\
0 &  & 0\\
 & 0 &  & 1\\
\vdots & \vdots & \vdots & \vdots\\
\hline\hline _{x=1} & _{x=2} & _{y=1} & _{y=2}
\end{array}\;.\label{eq:sequence}
\end{equation}
We see no logical reason to think that the observed value $\left(A^{1,2},B^{1,2}\right)=\left(0,1\right)$
is somehow related to any specific values of the AB-pairs for contexts
other than the factual $\left(x,y\right)=\left(1,2\right)$. Such
a relation would only be reasonable if there existed a way to observe
these alternative AB-pairs by factually performing the measurements
for $\left(x,y\right)=\left(1,1\right)$, $\left(2,1\right)$, and
$\left(2,2\right)$ in addition to $\left(x,y\right)=\left(1,2\right)$
on the same pair of particles. This is, however, impossible, and not
because a projective measurement is known to destroy the state of
entanglement. We are not allowed to use quantum mechanical considerations
when conceptually testing the standard quantum mechanical view of
elementary particles.\footnote{In fact, it has recently been established \citep{TavakoliCabello2018,Folettoetal.2020}
that a sequence of POVM-represented measurements (each of which depends
on the settings and outcomes of the previous measurements) can be
performed on the same pair of entangled particles without affecting
the entanglement state.} The reason we cannot speak of the ``the same'' pair of entangled
particles being measured repeatedly using variable settings is logical
rather than physical. It is critical for our analysis that all particle
pairs be generated and prepared in precisely the same way, and their
spin values, for any given choice of settings, be measured in precisely
the same way. A pair of particles after a measurement has been performed
on it simply is not the same pair as it was before, and it should
be relabeled accordingly. Barring such relabeling, a sequence like
(\ref{eq:sequence}) should be treated as several (here, four) unrelated
to each other sequences, each defined by a specific context.

One may now ask seemingly the same question as in the FWT but applied
to pairs of particles treated as realizations of specific AB-pairs:
can they be considered deterministic variables? In other words, the
question is whether each pair of particles can be described as 
\begin{equation}
\begin{array}{r}
\begin{array}{|c||c|}
\hline A^{x,y}\equiv a^{x,y} & B^{x,y}\equiv b^{x,y}\\\hline \end{array}\end{array}\;,
\end{equation}
with the specific settings $(x,y)$ under which the measurements are
recorded. This is, however, a very different question, and the answer
is: yes, if one so wishes, but this makes no difference. Flips of
a fair coin can always be considered a mixture of two deterministic
variables with respective values Head and Tail, each occurring with
probability $\nicefrac{1}{2}$. Following the prevailing tradition
in statistics, a sequence of realizations of an AB-pair can be treated
as a realization of a \emph{random sample} (a set of identically distributed
independent random variables). However, it can also be treated as
a realization of a set of different random variables (e.g., deterministic
ones), randomly alternating. Let us again use $\mathcal{R}_{EPRB}$
for an illustration. Assume that one of its AB-pairs is distributed
as 
\begin{equation}
\begin{array}{c|c|c|c|c}
\text{\textnormal{value}} & \left(1,1\right) & \left(1,0\right) & \left(0,1\right) & \left(0,0\right)\\
\hline \textnormal{probability} & p & 1-p & 0 & 0
\end{array}.
\end{equation}
Clearly, this variable is indistinguishable from any mixture 
\begin{equation}
\left\{ \begin{array}{ccc}
X & \textnormal{with probability } & q\\
Y & \textnormal{with probability } & 1-q
\end{array}\right.
\end{equation}
of the random variables $X,Y$ with the respective distributions 
\begin{equation}
\begin{array}{ccc}
X:\begin{array}{|c|c}
\left(1,1\right) & \left(1,0\right)\\
\hline p_{1} & 1-p_{1}
\end{array}, &  & Y:\begin{array}{|c|c}
\left(1,1\right) & \left(1,0\right)\\
\hline p_{2} & 1-p_{2}
\end{array},\end{array}
\end{equation}
provided 
\begin{equation}
qp_{1}+\left(1-q\right)p_{2}=p.
\end{equation}
In particular (and trivially), $X,Y$ can be viewed as deterministic
variables, 
\begin{equation}
\begin{array}{ccc}
X:\begin{array}{|c|c}
\left(1,1\right) & \left(1,0\right)\\
\hline 1 & 0
\end{array}, &  & Y:\begin{array}{|c|c}
\left(1,1\right) & \left(1,0\right)\\
\hline 0 & 1
\end{array},\end{array}
\end{equation}
mixed as 
\begin{equation}
\left\{ \begin{array}{ccc}
X & \textnormal{with probability } & p\\
Y & \textnormal{with probability } & 1-p
\end{array}\right.\;.
\end{equation}
We see that the question of whether the individual pairs of particles
have ``free will'' of their own, i.e., whether they are deterministic
entities, looses its meaning.

It should also be noted that once this context-wise view of the individual
particle pairs is adopted, one need not be concerned with the independence
assumption (``free will'' of Alice and Bob). In fact this assumption
now is unformulable. Each AB-pair corresponds to a fixed context,
and one cannot informatively say that the AB-pair in a given context
does not depend on context.

\section{\label{sec:Conclusion}Conclusion}

To summarize, the view that follows from the conceptual framework
of CbD does not invalidate the FWT-type theorems, but makes them unnecessary.
These theorems can be viewed as reductio ad absurdum demonstrations
that individual pairs of particles cannot be viewed as deterministic
systems. CbD allows one to streamline these demonstrations, ridding
them of unnecessary assumptions. Moreover, CbD leads one to the view
that the individual pairs of particles should not be viewed as systems
to begin with, only as realizations of random variables for a given
choice of settings. As Peres famously put it, ``unperformed experiments
have no results'' \citep{Peres1978} --- and these non-existing
results should not be appended to factual results, lest one runs into
a contradiction. The reason we can speak of $\mathcal{R}$ in (\ref{eq:system gen})
as a system is that as we switch from one context to another, all
other macroscopic circumstances of measurements (overall experimental
set-up and preparation procedure) remain the same. We can repeatedly
experiment with such a system without changing its defining parameters.
In particular, a deterministic system allows one to repeatedly experiment
with it and obtain the same results for any given context. Individual
pairs of particles, unless they change their identity, afford only
one pair of measurements, in one particular context.

\subsubsection*{Acknowledgements.}

This research was supported by the Purdue University's Research Refresh
award. We are grateful to Pawe{\l } Kurzy{\'{n}}ski for critically
commenting on an earlier draft of the paper. We are also grateful
to Ad\'an Cabello for discussing with us the issue of repeated measurements
on individual particles.

\end{document}